\newtheorem{definition}{Definition}
\newtheorem{proposition}{Proposition}
\title{
	\bfseries
	Complementary Characterization of Agent-Based Models via\\
	Computational Mechanics and Diffusion Models
}
\author{
	Roberto Garrone\\
	\small University of Milano-Bicocca\\
	\small \texttt{roberto.garrone@unimib.it}
}
\date{\today}
\begin{document}
	\maketitle
	
	\begin{abstract}
	This article extends the preprint \emph{Characterizing Agent-Based Model Dynamics via 
		$\epsilon$-Machines and Kolmogorov-Style Complexity}
	by introducing diffusion models as orthogonal and complementary 
	tools for characterizing the output of agent-based models (ABMs).
	Where $\epsilon$-machines capture the predictive temporal structure 
	and intrinsic computation of ABM-generated time series, diffusion 
	models characterize high-dimensional cross-sectional distributions, 
	learn underlying data manifolds, and enable synthetic generation of 
	plausible population-level outcomes. 
	
	We provide a formal analysis demonstrating that the two approaches 
	operate on distinct mathematical domains---processes vs.\ distributions---
	and show that their combination yields a two-axis representation of ABM 
	behavior based on temporal organization and distributional geometry. 
	To our knowledge, this is the first framework to integrate 
	computational mechanics with score-based generative modeling for 
	the structural analysis of ABM outputs, thereby situating ABM 
	characterization within the broader landscape of modern machine-learning 
	methods for density estimation and intrinsic computation.
	
	The framework is validated using the same elder--caregiver ABM dataset 
	introduced in the companion paper, and we provide precise definitions 
	and propositions formalizing the mathematical complementarity between 
	$\epsilon$-machines and diffusion models. 
	This establishes a principled methodology for jointly analyzing temporal 
	predictability and high-dimensional distributional structure in complex 
	simulation models.

	\end{abstract}
	
\section{Introduction}

Agent-based models (ABMs) generate complex outputs composed of 
both temporal trajectories (e.g.\ agent states evolving over time) 
and cross-sectional or high-dimensional snapshots 
(e.g.\ populations of agents with multivariate attributes).  
In a companion preprint \citep{garrone2025epsilonmachine}, 
we analyzed ABM temporal outputs using computational mechanics 
\citep{crutchfield1989inferring,shalizi2001computational} 
and $\epsilon$-machines \citep{crutchfield2012between}, 
demonstrating substantial heterogeneity in intrinsic computation 
across spatial and socioeconomic variables.

However, ABMs also produce high-dimensional, non-sequential outputs 
for which $\epsilon$-machines are not suitable. 
Diffusion models---a class of score-based generative models 
\citep{sohldickstein2015deep,ho2020denoising,song2019generative,song2021maximum}---
excel at representing complex, high-dimensional distributions.  
This manuscript demonstrates that $\epsilon$-machines and diffusion 
models provide two \emph{complementary} and non-overlapping 
characterizations of ABM output, enabling a two-dimensional 
representation of model behavior: temporal structure versus 
distributional geometry. Formally, we distinguish between the process domain, analyzed via a mapping 
$\Phi : P \to M$, and the distribution domain, analyzed via a mapping 
$\Psi : D \to G$, a distinction developed in Section~3.

To our knowledge, this is the first work to integrate 
computational mechanics and diffusion-based generative modeling 
into a unified analytic framework for ABM characterization.  
By treating temporal and distributional structures as 
distinct informational domains, we introduce a principled 
methodology for jointly analyzing intrinsic computation and 
population-level geometry, providing a perspective not available 
in existing ABM, complexity-science, or machine-learning approaches.

We provide mathematical definitions and propositions to 
formalize this complementarity, apply the framework to the 
elder--caregiver ABM introduced in the companion paper 
\citep{garrone2025epsilonmachine}, and outline 
a unified methodology for combining the two analytic tools.

	\section{Background}
	
	\subsection{ABM Outputs: Temporal and Distributional Components}
	
	Let an ABM generate:
	\begin{itemize}[noitemsep]
		\item temporal sequences 
		$X^{(i)}_{0:T} = (X^{(i)}_0, \dots, X^{(i)}_T)$ 
		for agents $i = 1,\dots,N$, and
		\item cross-sectional population snapshots 
		$Y_t = (Y_{t,1}, \dots, Y_{t,N})$ 
		with each agent represented by a vector 
		$Y_{t,i} \in \mathbb{R}^d$.
	\end{itemize}
	
	These two object types—sequences and population vectors—are governed by fundamentally 
	different statistical structures and therefore require distinct analytic tools.

	\subsection{$\epsilon$-Machines (Computational Mechanics)}
		Computational mechanics characterizes a stationary stochastic process by reconstructing the
	minimal predictive architecture compatible with its observable dynamics. Given a sequence
	$\{X_t\}$ generated by an ABM, two pasts $x_{:t}$ and $x'_{:t}$ are said to be predictively
	equivalent when they yield identical conditional distributions over all possible futures:
	\[
	\mathbb{P}(X_{t:\infty} \mid X_{:t} = x_{:t})
	=
	\mathbb{P}(X_{t:\infty} \mid X_{:t} = x'_{:t}) .
	\]
	Each equivalence class defines a \emph{causal state}, and the resulting collection of causal states,
	together with the transition structure induced by observed symbol sequences, forms the
	$\epsilon$-machine: the minimal unifilar hidden Markov model that captures the process’s
	predictive organization.
	Causal states 
	\citep{crutchfield1989inferring,shalizi2001computational} 
	are defined as equivalence classes:
	\[
	\epsilon(x_{:t}) = \{x'_{:t} : 
	P(X_{t:\infty} \mid x'_{:t}) = P(X_{t:\infty} \mid x_{:t})\}.
	\]
	
	Three standard informational quantities summarize the structure encoded in the
	$\epsilon$-machine.  
	\emph{Entropy rate} $h_\mu$ measures the average unpredictability per symbol.  
	\emph{Statistical complexity} $C_\mu$ is the Shannon entropy of the causal-state distribution,
	quantifying how much information the process stores in order to make predictions.  
	\emph{Excess entropy} $E$ is the mutual information between past and future and measures the
	total predictable structure in the process.  
	Together, these invariants characterize how a system generates, stores, and transmits
	information through time.
	
	In ABM settings, causal states often correspond to recurrent behavioral regimes, 
	coordination cycles, or other emergent patterns generated by agent interactions. Unlike scalar measures of variability, the
	$\epsilon$-machine reveals the architecture of information flow—how past configurations
	constrain future evolution. This makes it particularly suitable for analyzing adaptive or
	path-dependent dynamics, where regularities coexist with stochastic fluctuations.
	
Accordingly, $\epsilon$-machine reconstruction operates within the process domain $P$ 
introduced in Section~3 and is formally realized through the mapping 
$\Phi : P \rightarrow M$, which assigns to each stationary stochastic process its 
minimal unifilar predictive model. This reconstruction is applied to symbolized 
representations of agent trajectories, obtained through discretization schemes 
appropriate to the underlying observables, and the associated methodological steps—
including discretization, history clustering, and BIC-based Markov-order selection—
follow the framework detailed in our companion preprint~\cite{garrone2025epsilonmachine}. 
Whereas $\Phi$ characterizes \emph{conditional} structure by quantifying how past 
configurations constrain future evolution, the next subsection introduces diffusion 
models and the mapping $\Psi : D \to G$, which instead characterize the geometry and 
multimodality of \emph{unconditional} distributions arising from ABM outputs.  
These two mappings therefore anchor the temporal and distributional components of the 
analytic framework developed below.

	\subsection{Diffusion Models}
	Whereas $\epsilon$-machines summarize process-level temporal organization, diffusion 
	models operate on distributions without temporal ordering, learning a generative 
	representation of the marginal geometry of ABM outputs. Indeed, diffusion models learn a probability distribution $p(y)$ over 
	high-dimensional observations by defining a forward noising 
	process $q(y_t \mid y_0)$ and learning a reverse denoising process
	parameterized by a score function $s_\theta(y_t, t)$.
	They were introduced in the physics-inspired formulation of 
	nonequilibrium diffusion \citep{sohldickstein2015deep} 
	and later developed into practical generative models 
	\citep{ho2020denoising,song2019generative,song2021maximum}.
	
	Diffusion models exemplify the integration of algorithmic learning with formal stochastic processes. Their forward noising dynamics implement a discretized diffusion or stochastic differential equation, progressively mapping data toward a tractable reference distribution such as an isotropic Gaussian \citep{sohldickstein2015deep}. The learned reverse process then approximates the time-reversal of this diffusion, yielding an efficient sampling mechanism for distributions that are otherwise computationally intractable to draw from \citep{song2019generative,ho2020denoising}. In score-based formulations, this reverse flow is parameterized through neural estimators of the score function, enabling a continuous-time interpretation in terms of stochastic and deterministic flows \citep{song2021scorebased}. The methodological core therefore lies in learning a family of parameterized transformations—stochastic SDE reversals or deterministic ODE flows—that convert simple base measures into complex high-dimensional data distributions.
	
	This positions diffusion models within the broader lineage of machine-learning methods for high-dimensional density estimation. Like variational autoencoders, they construct generative mappings through latent-variable–based inference \citep{kingma2013auto}; like normalizing flows, they define invertible or approximately invertible transformations of probability mass \citep{rezende2015variational}; and like energy-based models, they leverage score estimation and gradient-field matching to characterize complex distributions \citep{hyvarinen2005estimation}. Diffusion models unify these perspectives through a stochastic-process–driven learning framework that is both mathematically principled and computationally scalable, contributing to the ongoing development of probabilistic generative modeling in informatics.
	
	Their purpose is orthogonal to computational mechanics: 
	they model the geometry and variation of the distribution 
	of ABM outputs rather than temporal predictability.

	\section{Mathematical Complementarity}
	
	We now formalize the domains in which $\epsilon$-machines and 
	diffusion models operate.
	
	\subsection{Processes vs.\ Distributions}
	
	\paragraph{Assumptions.}
	The analysis in this section relies on two standard assumptions.
	First, temporal observables used for $\epsilon$-machine reconstruction 
	are assumed to be stationary (or weakly stationary), ensuring the 
	existence of well-defined predictive distributions and intrinsic 
	computation measures. 
	Second, diffusion models operate on high-dimensional cross-sectional 
	snapshots rather than full trajectories; the distributions they learn 
	are therefore interpreted as static population-level marginals at a 
	given simulation time. 
	These assumptions clarify the distinct informational domains on which 
	the two methods operate and ensure that their mathematical 
	characterizations remain well-posed.

	\begin{definition}[Process Domain]
		Let $\mathcal{P}$ denote the set of all stationary stochastic processes 
		$\{X_t\}_{t \in \mathbb{Z}}$. 
		An $\epsilon$-machine is a mapping 
		$\Phi: \mathcal{P} \to \mathcal{M}$ 
		where $\mathcal{M}$ denotes the set of minimal unifilar HMMs.
	\end{definition}
	
	\begin{definition}[Distribution Domain]
		Let $\mathcal{D}$ denote the set of all probability distributions 
		over high-dimensional vectors $y \in \mathbb{R}^d$. 
		A diffusion model is a mapping 
		$\Psi: \mathcal{D} \to \mathcal{G}$ 
		where $\mathcal{G}$ denotes the set of parameterized generative models.
	\end{definition}
	
	\begin{proposition}[Disjoint Analytical Domains]
		$\mathcal{P}$ and $\mathcal{D}$ are mathematically distinct: 
		$\mathcal{P}$ defines laws over \emph{infinite sequences}, 
		while $\mathcal{D}$ defines laws over \emph{static vectors}. 
		Hence $\Phi$ and $\Psi$ operate on orthogonal domains.
	\end{proposition}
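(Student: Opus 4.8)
The proposition is essentially a structural observation about the measurable spaces underlying the two constructions, so the plan is to make the phrase ``mathematically distinct'' precise by exhibiting an invariant that separates $\mathcal{P}$ from $\mathcal{D}$, and then to argue that each of the two mappings is well-posed only on its own domain. First I would fix the ambient sample spaces: an element of $\mathcal{P}$ is a probability law on the sequence space $\mathcal{X}^{\mathbb{Z}}$ equipped with its product $\sigma$-algebra, whereas an element of $\mathcal{D}$ is a probability law on $(\mathbb{R}^d, \mathcal{B}(\mathbb{R}^d))$. These carry different structure in at least two concrete senses: first, the index set is $\mathbb{Z}$ -- countably infinite and totally ordered -- versus the finite coordinate set $\{1,\dots,d\}$; second, membership in $\mathcal{P}$ is defined by invariance under the shift map $\sigma:\mathcal{X}^{\mathbb{Z}}\to\mathcal{X}^{\mathbb{Z}}$, a group action for which $\mathcal{D}$ has no analogue. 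Either observation suffices to conclude that $\mathcal{P}$ and $\mathcal{D}$ are not the same collection of objects and admit no canonical identification.

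Second, I would connect this to the mappings themselves. The construction of $\Phi$ (Definition~1) proceeds through the causal-state equivalence relation on \emph{pasts} $x_{:t}$, which presupposes a temporal order separating past from future; this datum is simply absent from a static law on $\mathbb{R}^d$, so $\Phi$ cannot be evaluated on elements of $\mathcal{D}$. Conversely, $\Psi$ (Definition~2) is built from a forward noising process on $\mathbb{R}^d$ together with a score function $s_\theta(y_t,t) \approx \nabla_{y}\log q_t(y)$, which requires the finite-dimensional Euclidean and differentiable structure of $\mathbb{R}^d$; it is not defined on a law over the infinite-dimensional sequence space. Hence the natural domains of $\Phi$ and $\Psi$ are disjoint, and neither construction extends to the other's domain -- this is the sense in which the two operate on ``orthogonal'' domains.

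The main obstacle is that ``orthogonal'' and ``distinct'' are not standard technical terms, and a naive reading could be defeated by an artificial embedding -- e.g.\ viewing $\mathbb{R}^d$ as the first $d$ coordinates of a sequence, or a static law as a degenerate constant-in-time process. I would preempt this by observing that, although $\mathcal{X}^{\mathbb{Z}}$ and $\mathbb{R}^d$ may both be standard Borel spaces and hence measurably isomorphic when both are uncountable, the relevant \emph{structures} do not transfer: $\mathcal{P}$ is singled out by shift-invariance and $\mathbb{R}^d$ carries no shift, while $\mathbb{R}^d$ is a finite-dimensional differentiable manifold and the sequence space is not. The clean statement I would actually establish is therefore: (a)~the defining extra structure of each domain -- shift-invariance for $\mathcal{P}$, finite-dimensional Euclidean geometry for $\mathcal{D}$ -- is absent from the other; and (b)~consequently the constructions $\Phi$ and $\Psi$ are each ill-posed on the opposite domain. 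Part~(b) follows directly from~(a) together with the definitions recalled in Section~2, so the only genuine content is the short structural bookkeeping in~(a), for which the shift-invariance characterization of $\mathcal{P}$ gives the most transparent argument.
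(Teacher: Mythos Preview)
Your argument is correct and somewhat more careful than the paper's own sketch, but the emphasis differs. Both you and the paper begin from the same observation---that $\mathcal{P}$ consists of measures on the infinite sequence space $\mathcal{X}^{\mathbb{Z}}$ while $\mathcal{D}$ consists of measures on the finite-dimensional space $\mathbb{R}^d$---so the first halves coincide. In the second half, however, the paper argues via the \emph{codomains}: no object in $\mathcal{M}$ (a minimal unifilar HMM) can encode a full distribution on $\mathbb{R}^d$, and no object in $\mathcal{G}$ (a score-based generative model) can encode conditional futures of a sequence. You instead argue via the \emph{domains}: $\Phi$ is ill-posed on $\mathcal{D}$ because the causal-state construction needs a past/future split, and $\Psi$ is ill-posed on $\mathcal{P}$ because score matching needs finite-dimensional differentiable structure. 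Your route buys more rigor---in particular, your explicit handling of the standard-Borel objection and the shift-invariance characterization of $\mathcal{P}$ preempt natural ``but can't we just embed one in the other?'' rebuttals that the paper's two-sentence sketch leaves unaddressed---while the paper's route is terser and ties the claim more directly to what each learned representation can and cannot express.
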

	
	\begin{proof}[Sketch]
		A stochastic process is a measure over trajectories in 
		$\mathcal{X}^{\mathbb{Z}}$, whereas a probability distribution 
		over $\mathbb{R}^d$ is a measure on finite-dimensional space.
		No representation in $\mathcal{M}$ recovers full distributions over 
		$\mathbb{R}^d$, and no model in $\mathcal{G}$ recovers conditional 
		futures of sequences. 
	\end{proof}
	
	\subsection{Orthogonality of Extracted Information}
	
\paragraph{Proposition 2 (Complementary Representational Axes).}
Let $P$ denote the process domain and $D$ the distribution domain, and let 
$I_{\mathrm{seq}}$ and $I_{\mathrm{dist}}$ denote the informational content made 
accessible by the mappings $\Phi : P \to M$ and $\Psi : D \to G$, respectively.  
Under the representational constraints of $\epsilon$-machines and diffusion models 
as employed here, $I_{\mathrm{seq}}$ and $I_{\mathrm{dist}}$ capture disjoint 
aspects of ABM outputs:
\[
I_{\mathrm{seq}} \cap I_{\mathrm{dist}} = \varnothing,
\]
in the sense that $\Phi$ encodes temporal predictability and causal-state 
organization, whereas $\Psi$ captures the geometry and multimodality of 
instantaneous distributions.  
This statement concerns the \emph{representable information} extractable under the 
two mappings and does not assert disjointness for arbitrary stochastic processes.

	\begin{proof}[Sketch]
		$\epsilon$-machines recover invariants of 
		predictive structure (state transitions, entropy rate), 
		which diffusion models cannot represent. 
		Diffusion models recover geometric and distributional properties 
		(covariance, multimodality, manifold structure), which 
		$\epsilon$-machines cannot represent due to the unifilar constraint. 
		Thus the two information sets are disjoint and jointly sufficient.
	\end{proof}
	
\section{Analytical Extensions for Parameter-Space Structure}

Understanding how agent-based model (ABM) behavior varies across parameter
space requires analytic tools that extend beyond variance-based global
sensitivity methods. 
While Sobol or Saltelli indices provide valuable information about how
parameters contribute to variability in specific summary statistics 
\citep{sobol2001global,saltelli2010variance}, 
they offer a limited view of the \emph{structural} organization of the model’s
behavior.  
ABMs often exhibit qualitatively distinct \emph{behavioral regimes}:
regions of parameter space that generate stable, coherent emergent
patterns, separated by transitions that may resemble bifurcations or
phase changes \citep{strogatz2018nonlinear,kuehn2011mathematical}.  
Identifying and characterizing these regimes calls for a complementary set
of analytical approaches, including regime analysis, qualitative pattern
mapping, and behavioral clustering.

\subsection{Regime Analysis.}
Regime analysis seeks to partition parameter space into domains that yield 
qualitatively different emergent behaviors, in line with ideas from 
bifurcation theory and dynamical transitions 
\citep{strogatz2018nonlinear,kuehn2011mathematical,scheffer2009early}. 
Within the present framework, regime boundaries may appear as abrupt
shifts in $\epsilon$-machine invariants—such as changes in $h_\mu$,
$C_\mu$, or causal-state topology—or as structural changes in
diffusion-based density estimates, including the emergence or
disappearance of modes, variations in effective dimensionality, or
expansions of tail mass.  
These discontinuities or inflection points reveal transitions between
behavioral regimes that cannot be inferred from variance-based indices
alone.  
Monitoring the evolution of $h_\mu(\theta)$, $C_\mu(\theta)$, $E(\theta)$,
diffusion-derived metrics such as KL or Wasserstein divergences, or
ABM-specific indices provides an operational route to detecting such
transitions.

\subsection{Qualitative Pattern Mapping.}
Beyond detecting regime boundaries, qualitative pattern mapping aims to
describe how emergent patterns evolve as parameters vary.  
This includes transitions from stationary to oscillatory dynamics,
reconfiguration of multimodality in population outcomes, or changes in the
coupling among variables such as walkability and caregiver burden.
Pattern-oriented approaches in ABM research highlight the importance of
relating qualitative macropatterns to underlying mechanisms 
\citep{grimm2005pattern,grimm2017pattern}.  
Pattern mapping involves examining causal-state diagrams across parameter
settings, characterizing distributional shape descriptors (e.g.,
skewness, covariance, modality), or inspecting geometric summaries
derived from diffusion manifolds.  
Such analyses are particularly valuable for avoiding equifinality, wherein
distinct parameter combinations produce similar scalar summaries but
differ markedly in their underlying emergent structures.

\subsection{Behavioral Clustering.}
A further complementary approach treats the ABM as a mapping
$\theta \mapsto I(\theta)$, where $I(\theta)$ denotes a vector of
$\epsilon$-machine and diffusion-based descriptors.  
Clustering these descriptors reveals \emph{behavioral classes} within
parameter space.  
Methods such as $k$-means \citep{lloyd1982least,macqueen1967methods}, 
Gaussian mixture models \citep{mclachlan2004finite}, 
hierarchical clustering \citep{jain1999data}, 
or density-based techniques such as DBSCAN \citep{ester1996density} 
can identify families of parameter settings that yield similar emergent
behaviors, as well as isolated islands of atypical or extreme behavior.  
This produces a higher-level taxonomy of ABM dynamics that complements
both regime detection and variance-based sensitivity.

\subsection{Complementarity with Sensitivity Analysis.}
Variance-based indices such as those of Sobol or Saltelli offer a useful
decomposition of parameter influence on specific response variables
\citep{sobol2001global,saltelli2010variance}, 
but they do not reveal the structural organization of ABM behavior.  
Regime analysis identifies qualitative transitions; pattern mapping
traces the evolution of emergent structures; and behavioral clustering
uncovers latent families of behaviors and regions of equifinality.
Together, these methods provide a comprehensive analytical framework for
characterizing how ABM behavior unfolds across parameter space,
complementing the temporal and distributional perspectives offered by
$\epsilon$-machines and diffusion models.

	\section{Application to ABM Output Characterization}
	The analytical extensions introduced in the previous section establish a general 
	framework for exploring how emergent temporal and distributional 
	signatures change across parameter space.  
	In this section, we apply the dual characterization framework 
	to the elder--caregiver ABM introduced in the companion paper, 
	illustrating how $\epsilon$-machine invariants and diffusion-derived 
	descriptors jointly illuminate heterogeneity, scenario differences, and 
	population-level structure within a concrete simulation environment.
	The goal is not to provide a full empirical analysis, but to show how $\epsilon$-machine invariants and 
	diffusion-derived descriptors jointly reveal temporal organization and distributional geometry.

	\subsection{Temporal Characterization via $\epsilon$-Machines}
	
	Using the dataset from \citep{garrone2025epsilonmachine}, 
	we summarize temporal organization of variables such as 
	walkability, caregiver efforts, and mobility. 
	The $\epsilon$-machines reveal variable-specific intrinsic computation 
	and distinguish memoryless dynamics from structured ones.
For example, 
	walkability shows low entropy rate but non-zero statistical complexity, indicating stable but nontrivial 
	temporal organization.

	\subsection{Distributional Characterization via Diffusion Models}
	
	The same ABM outputs are treated as high-dimensional population-state vectors. 
	The models learn:
	\begin{itemize}[noitemsep]
		\item multimodal agent distributions,
		\item joint variation across outcomes,
		\item cross-sectional heterogeneity,
		\item generative synthesis of alternative ABM outcomes.
	\end{itemize}
	
	Diffusion models trained on such cross-sectional population vectors uncover multimodal agent distributions and 
	joint correlations between variables such as efforts and walkability. The learned manifolds reveal clusters 
	corresponding to heterogeneous caregiver–elder conditions and enable synthetic generation of plausible 
	population-level outcomes.
	
	\subsection{Combined Analysis}
	The complementary roles of $\epsilon$-machines and diffusion models become most apparent when 
	their outputs are examined jointly. We compare:
	\begin{enumerate}[noitemsep]
		\item the $\epsilon$-machine complexity profile of each variable, as summarized by 
		$(h_\mu, C_\mu, E)$,
		\item the geometry of the learned diffusion manifolds, and
		\item correlations between intrinsic computation and distributional variability.
	\end{enumerate}
	This joint perspective yields a two-axis characterization of the ABM, capturing both the temporal 
	organization of agent-level trajectories and the high-dimensional distributional structure of 
	population snapshots.
	
	Joint inspection of $(h_\mu, C_\mu, E)$ and diffusion-derived descriptors—such as effective 
	dimensionality, local score norms, and measures of multimodality—reveals coordinated structure 
	across temporal and distributional domains. Variables exhibiting higher intrinsic computation 
	(e.g., elevated $C_\mu$ or nontrivial causal-state topology) tend to produce sharper or more 
	articulated diffusion manifolds, indicating richer heterogeneity and stronger geometric 
	organization in the corresponding cross-sectional distributions. This alignment suggests a 
	systematic relationship between temporal predictability and distributional heterogeneity in ABM 
	outputs, providing a coherent joint summary of emergent model behavior.

\section{Limitations, Technical Extensions, and Future Work}

The dual characterization introduced in this work focuses on two
complementary facets of ABM behavior: the temporal organization of
observable sequences, captured via $\epsilon$-machines, and the
high-dimensional distributional geometry of population-level outcomes,
captured via diffusion models. 
While this provides a non-overlapping, output-centric representation of
model behavior, the framework remains limited in scope and naturally
suggests several technically substantive extensions.  
This section outlines the main methodological limitations of the current
work and describes promising directions for future development, including
multiscale analysis and joint temporal--distributional response
mappings.

\subsection{Current Limitations}

First, the present methodology does not incorporate any analysis of the
interaction topology among agents. 
For ABMs with explicit spatial or social network structures, properties
such as degree distributions, clustering, or modularity may have
significant effects on emergent outcomes.  
Because the current analysis focuses exclusively on observable sequences
and population snapshots, it cannot quantify how interaction structure
influences temporal organization or distributional geometry.

Second, the analysis is conducted at a single temporal resolution.
Neither multiscale $\epsilon$-machines nor diffusion models trained on
temporally aggregated population snapshots are considered.
This prevents assessment of scale-dependent predictive structure,
distributional variability, or coarse-grained regime transitions.

Finally, the characterization is intentionally output-based.
It does not attempt to infer or analyze the rule-level mechanisms that
generate the observed dynamics.
Methods such as symbolic regression, surrogate modeling, or micro-level
rule mining would be required to connect intrinsic computation and
distributional geometry back to agent decision heuristics.

\subsection{Multiscale Extensions}

A technically rich extension involves reconstructing $\epsilon$-machines
at multiple temporal scales.
Let
\[
X^{(k)}_n = f_k(X_{nk:(n+1)k-1})
\]
denote a $k$-aggregated observable.
For each scale $k$, the process $\{X^{(k)}_n\}$ yields a corresponding
causal-state partition $\epsilon_k$ and $\epsilon$-machine $\mathcal{M}_k$,
together producing a family of scale-indexed invariants
\[
h_{\mu}(k), 
\quad C_{\mu}(k), 
\quad E(k).
\]
These sequences can reveal whether predictive structure persists,
strengthens, or collapses under coarse-graining, and whether the ABM
exhibits scale-dependent regime transitions.

Analogously, diffusion models may be trained on aggregated population
snapshots,
\[
Y^{(k)}_n = g_k(Y_{nk:(n+1)k-1}),
\]
producing a hierarchy of distributions $p_k(y)$ and associated generative
models $\Psi_k$. 
This multiresolution family supports analysis of how multimodality,
correlation structure, and manifold geometry evolve with temporal
granularity and enables cross-scale consistency checks analogous to
renormalization operators.

\subsection{Parameter--Response Mappings}

Another promising extension involves studying how $\epsilon$-machine
invariants and diffusion-based descriptors vary across parameter space.
Let $\theta \in \Theta$ denote ABM parameters.
The mapping
\[
\Gamma_{\epsilon}(\theta)
= \big(h_{\mu}(\theta), C_{\mu}(\theta), E(\theta)\big)
\]
defines a temporal-response surface, while
\[
\Gamma_{\Psi}(\theta) = G(\theta)
\]
collects diffusion-based geometric descriptors such as score norms,
effective dimensionality, higher-order moments of $p_\theta$, or 
$\mathrm{KL}(p_\theta \,\|\, p_{\theta_0})$.
Sobol or Morris variance decompositions applied to the coordinates of
$\Gamma_{\epsilon}$ and $\Gamma_{\Psi}$ would quantify parameter influence
on intrinsic computation and population-level geometry.
Sharp changes in these summaries as functions of $\theta$ may indicate
dynamical regime shifts or structural transitions in the ABM.

A combined representation,
\[
\Gamma(\theta) 
= \big(\Gamma_{\epsilon}(\theta),\; \Gamma_{\Psi}(\theta)\big),
\]
yields a joint temporal--distributional response landscape.
Such a representation enables sensitivity analysis, bifurcation
detection, and parameter-based clustering of model behaviors.

\subsection{Joint Scale--Parameter Extensions}

Combining the multiscale and parameter-sweep perspectives produces a
tensor-valued descriptor
\[
\mathcal{T}_{i,j} 
= \big(R_{\epsilon}^{(k_i)}(\theta_j),\; G^{(k_i)}(\theta_j)\big),
\]
where $i$ indexes temporal aggregation scales and $j$ indexes parameter
settings.
This representation may support techniques such as tensor decomposition,
manifold learning, or sparse structure discovery over the joint
scale--parameter domain.
Together, these directions outline a path toward a comprehensive,
multi-resolution, parameter-aware characterization of ABM dynamics.

\section{Discussion}

The results presented in this work demonstrate that $\epsilon$-machines
and diffusion models provide two non-overlapping and mutually reinforcing
perspectives on agent-based model outputs. 
$\epsilon$-machines characterize temporal organization by reconstructing
the minimal predictive structure compatible with observed trajectories,
yielding quantities such as entropy rate, statistical complexity, and
excess entropy. 
Diffusion models, in contrast, operate on high-dimensional population
snapshots, learning the geometry of the underlying distribution and
providing a flexible generative representation of cross-sectional
variability.

Because these methods operate on distinct mathematical domains---process
spaces for $\epsilon$-machines and distribution spaces for diffusion
models---the information each extracts is representationally orthogonal within the framework defined here.
Together, they form a two-axis analytic framework capturing both the
sequential dynamics of agent observables and the multivariate structure
of population-level outcomes. 
The combined approach does not seek to describe the internal mechanisms
or rule sets of an ABM; instead, it provides a principled summary of
observable behavior in terms of temporal predictability and
distributional geometry.

This division of analytic responsibilities proves particularly advantageous
for ABMs whose outputs intertwine heterogeneous temporal processes with
complex multivariate agent states. 
By separating the temporal and distributional components of the output
space, the framework enables clearer interpretation, facilitates
comparisons across scenarios, and supplies structured targets for model
validation or surrogate modeling. 
The extensions outlined in the preceding section---including multiscale
analysis, parameter--response modeling, and integration with
interaction topology---demonstrate how the complementary roles of
$\epsilon$-machines and diffusion models can support broader analytic
pipelines for simulation-based research.

\section{Conclusion}

This paper extends the analysis introduced in our companion preprint by
combining computational mechanics and diffusion models to obtain a
two-axis characterization of agent-based model outputs.
$\epsilon$-machines capture the temporal organization and intrinsic
predictive structure of simulation trajectories, while diffusion models
provide a complementary description of high-dimensional distributional
geometry in population-level outcomes. 
Together, these tools yield a non-overlapping representation of ABM
behavior that separates sequential dynamics from cross-sectional
variability.

The framework presented here is general and can be applied to a wide
range of agent-based models.
By delineating the distinct informational domains associated with
process-based and distribution-based characterizations, this work
provides a foundation for richer methodological pipelines integrating
temporal, spatial, and distributional analyses within a unified analytic
structure.
The extensions outlined above point toward several promising areas for
future research, including multiscale characterization, joint
parameter--response analysis, and connecting output-based summaries to
interaction topology or rule-level inference.
We expect the combined use of computational mechanics and diffusion
models to serve as a flexible and powerful methodology for the study of
complex adaptive systems in simulation.

	\bibliographystyle{plainnat}
	\bibliography{references}

@article{garrone2025epsilonmachine,
	title={Characterizing Agent-Based Model Dynamics via $\varepsilon$-Machines and Kolmogorov-Style Complexity},
	author={Garrone, Roberto},
	journal={arXiv preprint arXiv:2510.12729},
	year={2025},
	url={https://arxiv.org/abs/2510.12729}
}

@article{crutchfield1989inferring,
	title={Inferring Statistical Complexity},
	author={Crutchfield, James P and Young, Karl},
	journal={Physical Review Letters},
	volume={63},
	number={2},
	pages={105--108},
	year={1989},
	publisher={American Physical Society},
	doi={10.1103/PhysRevLett.63.105},
	url={https://doi.org/10.1103/PhysRevLett.63.105},
	issn={0031-9007}
}

@article{shalizi2001computational,
	title={Computational Mechanics: Pattern and Prediction, Structure and Simplicity},
	author={Shalizi, Cosma Rohilla and Crutchfield, James P},
	journal={Journal of Statistical Physics},
	volume={104},
	number={3--4},
	pages={817--879},
	year={2001},
	publisher={Springer},
	doi={10.1023/A:1010388907793},
	url={https://doi.org/10.1023/A:1010388907793},
	issn={0022-4715}
}

@article{crutchfield2012between,
	title={Between Order and Chaos},
	author={Crutchfield, James P},
	journal={Nature Physics},
	volume={8},
	number={1},
	pages={17--24},
	year={2012},
	doi={10.1038/nphys2190},
	url={https://doi.org/10.1038/nphys2190},
	issn={1745-2473}
}

@article{ho2020denoising,
	title={Denoising Diffusion Probabilistic Models},
	author={Ho, Jonathan and Jain, Ajay and Abbeel, Pieter},
	journal={Advances in Neural Information Processing Systems},
	volume={33},
	pages={6840--6851},
	year={2020},
	url={https://papers.nips.cc/paper/2020/hash/4c5bcfec8584af0d967f1ab10179ca4b-Abstract.html}
}

@article{song2021scorebased,
	title={Score-Based Generative Modeling through Stochastic Differential Equations},
	author={Song, Yang and Sohl-Dickstein, Jascha and Kingma, Diederik P and Kumar, Abhishek and Ermon, Stefano and Poole, Ben},
	journal={International Conference on Learning Representations},
	year={2021},
	url={https://arxiv.org/abs/2011.13456}
}

@inproceedings{song2019generative,
	title={Generative Modeling by Estimating Gradients of the Data Distribution},
	author={Song, Yang and Ermon, Stefano},
	booktitle={Advances in Neural Information Processing Systems},
	volume={32},
	year={2019},
	url={https://arxiv.org/abs/1907.05600},
	doi={10.48550/arXiv.1907.05600}
}

@inproceedings{song2021maximum,
	title={Maximum Likelihood Training of Score-Based Diffusion Models},
	author={Song, Yang and Durkan, Conor and Murray, Iain and Ermon, Stefano},
	booktitle={Advances in Neural Information Processing Systems},
	volume={34},
	pages={1415--1428},
	year={2021},
	url={https://arxiv.org/abs/2101.09258},
	doi={10.48550/arXiv.2101.09258}
}

@article{sohldickstein2015deep,
	title={Deep Unsupervised Learning using Nonequilibrium Thermodynamics},
	author={Sohl-Dickstein, Jascha and Weiss, Eric and Maheswaranathan, Niru and Ganguli, Surya},
	journal={International Conference on Machine Learning},
	pages={2256--2265},
	year={2015},
	url={https://arxiv.org/abs/1503.03585},
	doi={10.48550/arXiv.1503.03585}
}

@article{hyvarinen2005estimation,
	title={Estimation of Non-Normalized Statistical Models by Score Matching},
	author={Hyv{\"a}rinen, Aapo},
	journal={Journal of Machine Learning Research},
	volume={6},
	pages={695--709},
	year={2005},
	url={https://www.jmlr.org/papers/v6/hyvarinen05a.html},
	issn={1533-7928}
}

@article{kingma2013auto,
	title={Auto-Encoding Variational Bayes},
	author={Kingma, Diederik P and Welling, Max},
	journal={International Conference on Learning Representations},
	year={2014},
	url={https://arxiv.org/abs/1312.6114}
}

@inproceedings{rezende2015variational,
	title={Variational Inference with Normalizing Flows},
	author={Rezende, Danilo J and Mohamed, Shakir},
	booktitle={International Conference on Machine Learning},
	pages={1530--1538},
	year={2015},
	url={https://arxiv.org/abs/1505.05770},
	doi={10.48550/arXiv.1505.05770}
}

@book{strogatz2018nonlinear,
	title={Nonlinear Dynamics and Chaos},
	author={Strogatz, Steven H},
	publisher={CRC Press},
	year={2018},
	isbn={9780367026509},
	url={https://www.routledge.com/Nonlinear-Dynamics-and-Chaos-With-Applications-to-Physics-Biology-Chemistry-and-Engineering/Strogatz/p/book/9780367026509}
}

@article{kuehn2011mathematical,
	title={A Mathematical Framework for Critical Transitions: Bifurcations, Fast--Slow Systems and Stochastic Dynamics},
	author={Kuehn, Christian},
	journal={Physica D},
	volume={240},
	number={12},
	pages={1020--1035},
	year={2011},
	doi={10.1016/j.physd.2011.02.012},
	url={https://doi.org/10.1016/j.physd.2011.02.012},
	issn={0167-2789}
}

@article{grimm2005pattern,
	title={Pattern-Oriented Modeling of Agent-Based Complex Systems: Lessons from Ecology},
	author={Grimm, Volker and Railsback, Steven F},
	journal={Science},
	volume={310},
	number={5750},
	pages={987--991},
	year={2005},
	doi={10.1126/science.1116681},
	url={https://doi.org/10.1126/science.1116681},
	issn={0036-8075}
}

@article{grimm2017pattern,
	title={The ODD Protocol for Describing Agent-Based Models: A Review and First Update},
	author={Grimm, Volker and others},
	journal={Ecological Modelling},
	volume={346},
	pages={3--17},
	year={2017},
	doi     = {10.1016/j.ecolmodel.2010.08.019},
	url     = {https://doi.org/10.1016/j.ecolmodel.2010.08.019},
	issn    = {0304-3800},
	publisher = {Elsevier}
}

@article{jain1999data,
	title={Data Clustering: A Review},
	author={Jain, Anil K and Murty, M Narasimha and Flynn, Patrick J},
	journal={ACM Computing Surveys},
	year={1999},
	doi={10.1145/331499.331504},
	url={https://doi.org/10.1145/331499.331504},
	issn={0360-0300}
}

@article{lloyd1982least,
	title={Least Squares Quantization in PCM},
	author={Lloyd, Stuart},
	journal={IEEE Transactions on Information Theory},
	volume={28},
	number={2},
	pages={129--137},
	year={1982},
	doi={10.1109/TIT.1982.1056489},
	url={https://doi.org/10.1109/TIT.1982.1056489},
	issn={0018-9448}
}

@inproceedings{macqueen1967methods,
	title        = {Some Methods for Classification and Analysis of Multivariate Observations},
	author       = {MacQueen, J.},
	booktitle    = {Proceedings of the Fifth Berkeley Symposium on Mathematical Statistics and Probability, Volume 1: Statistics},
	pages        = {281--297},
	year         = {1967},
	publisher    = {University of California Press},
	url          = {https://projecteuclid.org/euclid.bsmsp/1200512992}
}

@article{scheffer2009early,
	title        = {Early-Warning Signals for Critical Transitions},
	author       = {Scheffer, Marten and Bascompte, Jordi and Brock, William A. and others},
	journal      = {Nature},
	volume       = {461},
	number       = {7260},
	pages        = {53--59},
	year         = {2009},
	doi          = {10.1038/nature08227},
	url          = {https://doi.org/10.1038/nature08227},
	issn         = {0028-0836}
}

@book{mclachlan2004finite,
	title={Finite Mixture Models},
	author={McLachlan, Geoffrey and Peel, David},
	publisher={Wiley},
	year={2000},
	doi={10.1002/0471721182},
	url={https://doi.org/10.1002/0471721182},
	isbn={9780471721187}
}

@inproceedings{ester1996density,
	title={A Density-Based Algorithm for Discovering Clusters in Large Spatial Databases with Noise},
	author={Ester, Martin and Kriegel, Hans-Peter and Sander, J{\"o}rg and Xu, Xiaowei},
	booktitle={KDD},
	pages={226--231},
	year={1996},
	url={http://www.aaai.org/Papers/KDD/1996/KDD96-037.pdf}
}

@article{saltelli2010variance,
	title   = {Variance based sensitivity analysis of model output: Design and estimator for the total sensitivity index},
	author  = {Saltelli, Andrea and Annoni, Paola and Azzini, Ivano and Campolongo, Francesca and Ratto, Marco and Tarantola, Stefano},
	journal = {Computer Physics Communications},
	volume  = {181},
	number  = {2},
	pages   = {259--270},
	year    = {2010},
	doi     = {10.1016/j.cpc.2009.09.018},
	url     = {https://doi.org/10.1016/j.cpc.2009.09.018},
	issn    = {0010-4655}
}

@article{sobol2001global,
	title={Global Sensitivity Indices for Nonlinear Mathematical Models and Their Monte Carlo Estimates},
	author={Sobol, Ilya M},
	journal={Mathematics and Computers in Simulation},
	volume={55},
	number={1--3},
	pages={271--280},
	year={2001},
	doi={10.1016/S0378-4754(00)00270-6},
	url={https://doi.org/10.1016/S0378-4754(00)00270-6},
	issn={0378-4754}
}
	
\end{document}